\documentclass{article}

\usepackage{amsmath,amssymb,amsthm}
\usepackage{multirow}
\usepackage{url}

\newcommand{\vertiii}[1]{{\vert\kern-0.25ex\vert\kern-0.25ex\vert #1 
    \vert\kern-0.25ex\vert\kern-0.25ex\vert}}

\usepackage{natbib} 
\usepackage[hang,small,bf]{caption}
\usepackage{graphicx}

\usepackage{xcolor}

\usepackage[CJKbookmarks=true,
            bookmarksnumbered=true,  
			bookmarksopen=true,
			colorlinks=true,
			citecolor=blue,
			linkcolor=blue,
			anchorcolor=red,
			urlcolor=blue]{hyperref}
\usepackage[capitalize,noabbrev]{cleveref}
\usepackage{todonotes}
\usepackage{booktabs} 
\usepackage{subcaption}
\usepackage{multirow}
\usepackage{accents}
\newlength{\dhatheight}

\usepackage[margin=1in]{geometry} 

\newcounter{rcnt}[section]

\def\argmin{\mathop{\rm argmin}}
\def\argmax{\mathop{\rm argmax}}

\newcommand{\pinull}{\pi_{\mathrm{null}}}
\newcommand{\hpinull}{\hat\pi_{\mathrm{null}}}

\newcommand{\EE}{\mathbb{E}}
\newcommand{\E}{\mathbb{E}}

\newcommand{\III}[1]{{\left\vert\kern-0.25ex\left\vert\kern-0.25ex\left\vert #1 
    \right\vert\kern-0.25ex\right\vert\kern-0.25ex\right\vert}}

\makeatletter
\providecommand*{\diff}%
	{\@ifnextchar^{\DIfF}{\DIfF^{}}}
\def\DIfF^#1{%
	\mathop{\mathrm{\mathstrut d}}%
		\nolimits^{#1}\gobblespace}
\def\gobblespace{%
	\futurelet\diffarg\opspace}
\def\opspace{%
	\let\DiffSpace\!%
	\ifx\diffarg(%
		\let\DiffSpace\relax
	\else
		\ifx\diffarg[%
			\let\DiffSpace\relax
	\else
		\ifx\diffarg\{%
			\let\DiffSpace\relax
		\fi\fi\fi\DiffSpace}

\theoremstyle{plain}
\newtheorem{theorem}{Theorem}

\newtheorem{lemma}[theorem]{Lemma}

\theoremstyle{definition}

\DeclareMathOperator{\lfdr}{lfdr}

\theoremstyle{remark}

\newcommand{\BS}[1]{
\text{BS}{(#1)}
}

\usepackage[blocks, affil-it]{authblk}
\title{Calibration without labels in multiple testing}
\author{Adway S. Wadekar}
\author{Jake A. Soloff}
\affil{Department of Statistics, University of Michigan}
\date{\today}

\begin{document}

\maketitle

\begin{abstract}
    Large-scale hypothesis testing supports probability claims about individual hypotheses, as in empirical Bayes methods for estimating local false discovery rates.
We study how such claims can be interpreted as approximately calibrated forecasts of the null hypothesis, yielding interpretable error probabilities even under model misspecification.
Our approach draws conceptual inspiration from probabilistic forecasting but addresses a different challenge: unlike forecasting, where labels are eventually observed, in multiple testing the ground truth is never revealed, so calibration must be assessed stochastically and established indirectly.
We address this challenge by constructing a set of pseudo-labels, derived from the spacings of ordered $p$-values, which have the local false discovery rate as their regression target.
Our construction unlocks existing tools for assessing and performing post-hoc calibration in multiple testing.
Notably, we find on a large-scale empirical survey of published psychology and neuroscience literature that the $q$-value, a popular error measure based on the false discovery rate, can be severely miscalibrated.
\end{abstract}

\section{Introduction}\label{sec:intro}

Multiple testing is often presented as a problem of \emph{multiplicity correction}, treating the proliferation of hypothesis tests as a liability to be managed. 
With more hypotheses come more opportunities for spurious rejection, and the central statistical task is commonly framed as controlling the rate at which such false discoveries occur \citep{benjamini1995controlling}. 
A complementary tradition, originating with Robbins' empirical Bayes~\citep{robbins1951asymptotically,robbins1956empirical,robbins1963empirical,robbins1964empirical} and developed extensively by Efron and others~\citep{efron2007size,efron2008microarrays,efron2010LSI,efron2001empirical,efron2002empirical}, has long recognized that a large collection of related tests is a \emph{resource}: the empirical distribution of test statistics carries structural information that enables estimation of precise, hypothesis-specific error probabilities.

We operate within this empirical Bayes tradition but propose \emph{calibration} as a key inferential goal of multiple testing.
Rather than treating posterior probabilities as endpoints to be estimated under a specific model, we ask when a reported score~$g(p)\in [0,1]$ stands on its own as a meaningful error rate: among tests assigned a score of $g(p)$, the corresponding null hypothesis is true at about that rate.
Because calibration is a property of the score itself, rather than of the model used to derive it, it offers a route to meaningful inference about individual hypotheses even under model misspecification.
We formalize this perspective by reframing multiple testing as a binary classification problem with hidden labels \citep{genovese2002operating,genovese2004stochastic}.
Each test has an unobserved, binary label~$Y_i\in \{0,1\}$ indicating whether the null hypothesis is true, and the corresponding $p$-value $p_i$ plays the role of a covariate. 
From this perspective, the local false discovery rate~\citep{efron2001empirical} is simply the regression function
\begin{align}\label{eq:lfdr}
    \lfdr(p) = \mathbb{P}\{Y=1\mid p\}.
\end{align}
Unlike standard supervised classification, in multiple testing, the labels $(Y_i)$ are unobserved, but one of the class-conditional densities is known: under the null, we assume 
\begin{align}\label{eq:uniform-nulls}
    p\mid Y=1\sim \text{Unif}([0,1]).
\end{align}
Thus, multiple testing might be thought of as \emph{stochastically supervised classification}, in which partial information about the data generating process substitutes for labeled data.

This framing clarifies why calibration is an appealing goal for hypothesis testing. 
A map $g : [0,1]\to [0,1]$ is calibrated if $\mathbb{E}[Y\mid g(p)] = g(p)$: conditional on receiving the score~$g(p)$, the null is true at a rate exactly~$g(p)$.
Calibration allows practitioners to interpret a score as a meaningful probability without further qualification. 
It has been studied extensively across meteorology~\citep{hallenbeck1920forecasting,glenn1950verification,murphy1973new}, statistics~\citep{dawid1982well, degroot1983comparison,gneiting2007probabilistic,gneiting2007strictly}, machine learning~\citep{platt1999probabilistic,guo2017calibration,gupta2020distribution}, game theory~\citep{foster1998asymptotic,foster1999proof,hart2000simple}, and algorithmic fairness~\citep{hebert-johnson18a,kleinberg2017inherent}.
Despite its broad history, calibration has received little attention in multiple testing, perhaps because the labels are unobserved and so calibration ostensibly cannot be assessed with standard tools. 
Yet the central insight of large-scale inference is precisely that many related experiments create new opportunities for statistical learning.
We show that calibration becomes statistically accessible with many $p$-values, even without ground truth labels.

\paragraph{Our contributions.} We highlight three main contributions:
\begin{enumerate}
    \item We propose a \textbf{pseudo-label construction} from $p$-value spacings whose regression target is the $\lfdr$, allowing tools from supervised calibration to be directly applied to multiple testing.
    \item We study isotonic calibration applied to the pseudo-labels and show it coincides with nonparametric empirical Bayes estimation of the $\lfdr$~\citep{soloff2024edge,strimmer2008unified}. We prove a finite-sample bound on the Brier regret that vanishes as~$m\to\infty$ even when the true $\lfdr$ is not monotone.
    \item We adapt reliability diagrams and calibration-error diagnostics to our setting and use them to demonstrate, in simulations and on real data, that standard FDR-based error measures can be severely miscalibrated, in some cases even worse than raw~$p$-values.
\end{enumerate}

\paragraph{Related work.} Our framework builds on the findings of \citet{xiang2025frequentist}, who observe that the $\lfdr$ provides a perfectly calibrated forecast of the null hypothesis. 
We develop this perspective by considering approximate calibration and how to properly define, assess and control calibration error.
\citet{panagiotou2012should} essentially used calibration (without calling it that) as a diagnostic for selecting a significance threshold, asking whether discoveries near the decision boundary replicate at some pre-specified rate. Our use of calibration is more explicit and broader: rather than treating it as a sanity check on a threshold, we view post-hoc calibration as a key inferential goal of multiple testing.

The methodology in Section~\ref{sec:post-hoc} builds on existing nonparametric approaches to estimating the $\lfdr$ under monotonicity \citep{soloff2024edge,strimmer2008unified}. We give alternative interpretations of this method and further justify its application even when monotonicity is violated. We also formulate our methodology as fitting a regression on the pseudo-labels, which at a high level relates to prior work on placing modeling assumptions directly on the lfdr rather than on the marginal density of the $p$-values \citep{klaus2011learning,rice2008comment}.

\section{Framework: calibration in multiple testing}
\label{sec:framework}

We now formalize the connection to binary classification introduced in Section~\ref{sec:intro}. 
For each experiment $i=1,\ldots,m$, let $Y_i = 1$ (or $0$) if the $i^\text{th}$ null hypothesis is true (or false).\footnote{In multiple testing we typically write $H_i=0$ if the null hypothesis is true, so the labels in this paper are given by~$Y_i=1-H_i$. In this paper, we flip the labels for notational convenience and to further highlight the connection to binary classification.} 
We further assume that the pairs $(p_i, Y_i)$ follow a two-groups model~\citep{efron2001empirical}, meaning they are independent and identically distributed across $i=1,\ldots, m$. Let $(p, Y)$ denote a generic copy, let $\pinull = \mathbb{P}\{Y = 1\}$ denote the overall prevalence of nulls, and let~$f$ (and $F$) denote the marginal density (and cdf) of~$p$. 
In this section, we introduce the two key ingredients for calibration in multiple testing. We first review notions of perfect and approximate calibration. Second, we construct a set of pseudo-labels, derived from the $p$-value spacings, which allow these concepts borrowed from supervised calibration to be utilized from $p$-values alone.

\subsection{Calibration, coarsening, and Brier regret}

A \emph{predictor} is any measurable function $g:[0,1]\to [0,1]$. We say that $g$ is \emph{perfectly calibrated} if 
\begin{equation}\label{eq:calib-def}
    \mathbb{E}[Y \mid g(p)] = g(p) \qquad \text{almost surely}.
\end{equation}
A score of $g(p) = 0.1$ thus guarantees that the null hypothesis is true $10\%$ of the time among experiments receiving that score. By the law of iterated expectations, \eqref{eq:calib-def} is equivalent to 
\begin{equation}\label{eq:calib-characterization}
    g(p) = \EE[\lfdr(p) \mid g(p)] \qquad \text{almost surely},
\end{equation}
so every perfectly calibrated predictor is a coarsening of the~$\lfdr$ \citep{gupta2020distribution,xiang2025frequentist}. 
The $\lfdr$ is thus the finest perfectly calibrated predictor; at the opposite extreme, the constant function~$g_0(p)\equiv\pinull$ is the coarsest, since it ignores the data~$p$ entirely. 

In multiple testing, it is natural to impose the additional constraint that smaller $p$-values should represent stronger evidence against the null.
This is not an assumption on the data generating process, but rather a design decision that amounts to restricting~$g$ to the class of monotone predictors:
\[
\mathcal{G}_\uparrow = \left\{ g : [0,1]\to [0,1] : g \text{ is nondecreasing}\right\}.
\]
The monotone analogue of the $\lfdr$ is the \emph{isotonized lfdr}, the $L_2(F)$ projection of the $\lfdr$ onto~$\mathcal{G}_\uparrow$:
\[
\lfdr_\uparrow = \argmin_{g\in \mathcal{G}_\uparrow} \mathbb{E}\left[(g(p) - \lfdr(p))^2\right].
\]
Although obtained by projection rather than explicit conditioning, the isotonized $\lfdr$ is perfectly calibrated and serves as the benchmark against which monotone calibrators are evaluated throughout the paper. \vspace{.5em}

\begin{theorem}\label{prop-iso-lfdr}
    The isotonized local false discovery rate $\lfdr_\uparrow$ is perfectly calibrated.
\end{theorem}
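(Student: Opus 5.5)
We need to show that $g^\star := \lfdr_\uparrow$ satisfies $\EE[\lfdr(p)\mid g^\star(p)] = g^\star(p)$ almost surely, which by \eqref{eq:calib-characterization} is equivalent to perfect calibration. The plan is to use the variational (first-order) characterization of the $L_2(F)$ projection onto the closed convex cone $\mathcal{G}_\uparrow$. For any $h$ such that $g^\star + t h \in \mathcal{G}_\uparrow$ for all small $|t|$, optimality gives
\[
\EE[(\lfdr(p) - g^\star(p))\, h(p)] = 0 .
\]
The key is to choose the perturbations $h(p) = \phi(g^\star(p))$, where $\phi:\mathbb{R}\to\mathbb{R}$ is bounded and measurable. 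If these are admissible, the orthogonality identity for all such $\phi$ is exactly the statement $\EE[\lfdr(p) - g^\star(p) \mid g^\star(p)] = 0$ a.s., which is what we want.

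The main step, and the expected obstacle, is checking admissibility: $g^\star + t\,\phi\circ g^\star$ must remain nondecreasing and $[0,1]$-valued. Since $g^\star$ is nondecreasing in $p$, $t\,\phi\circ g^\star$ is a function of $g^\star$ alone. So $g^\star + t\,\phi\circ g^\star = \psi_t\circ g^\star$ with $\psi_t(u) = u + t\phi(u)$, and this is nondecreasing whenever $\psi_t$ is nondecreasing on the range of $g^\star$. Two-sided perturbations with arbitrary $\phi$ can fail this, so I would argue one-sidedly, using the cone/variational inequality
\[
\EE[(\lfdr - g^\star)(g - g^\star)] \le 0 \quad \text{for all } g\in\mathcal{G}_\uparrow .
\]
Take $g = \psi\circ g^\star$ for any nondecreasing $\psi:[0,1]\to[0,1]$; this lies in $\mathcal{G}_\uparrow$. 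Choosing $\psi(u) = (1\pm\epsilon)u$, suitably clipped, and $\psi(u)=u \pm \epsilon\,\mathbf{1}\{u\in B\}$-type modifications gives the needed identities. The cleaner route is to note that the class $\{\psi\circ g^\star\}$ is itself a closed convex set containing $g^\star$. Projecting $\lfdr$ onto all square-integrable functions of $g^\star(p)$ gives $\tilde g = \EE[\lfdr(p)\mid g^\star(p)] = \tilde\psi(g^\star(p))$.

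It then remains to show that $\tilde\psi$ can be taken nondecreasing and $[0,1]$-valued, so that $\tilde g\in\mathcal{G}_\uparrow$. This holds by a pooling argument. On each level set $\{g^\star = u\}$, $\tilde g$ averages $\lfdr$. If $\tilde\psi$ were decreasing somewhere, $g^\star$ would not be optimal among monotone functions: one would replace $g^\star$ by $\tilde g$ wherever it agrees in order, and use the pool-adjacent-violators logic showing that the projection is constant exactly on blocks where it equals the block average. Alternatively, and more simply, compare losses via Pythagoras:
\[
\EE[(\lfdr - g^\star)^2] = \EE[(\lfdr-\tilde g)^2] + \EE[(\tilde g - g^\star)^2].
\]
So if $\tilde g\in\mathcal{G}_\uparrow$, the uniqueness of the projection forces $\tilde g = g^\star$ a.s., which is the desired calibration.

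To establish $\tilde g\in\mathcal{G}_\uparrow$ directly, I would use the known min–max (or level-set) formula for isotonic projection. On any maximal interval $I$ where $g^\star$ is constant equal to $u$, one has $u = \EE[\lfdr(p)\mid p\in I]$. This follows from the first-order conditions with the perturbation $h=\pm\mathbf{1}_I$, which preserves monotonicity for small $t$ because $I$ is a maximal constancy interval. Level sets of a nondecreasing function are intervals, so $\{g^\star=u\}=I$, giving $\tilde\psi(u)=u$ directly. The technical obstacle is the uncountably many strictly increasing parts, where level sets are singletons of measure zero. There I would instead use perturbations $h = \pm\mathbf{1}\{g^\star \le c\}$ applied one-sidedly, together with the variational inequality, to show $\EE[(\lfdr-g^\star)\mathbf{1}\{g^\star\in B\}]=0$ for all intervals $B$ and hence for all Borel $B$.
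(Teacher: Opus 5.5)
\textbf{There is a genuine gap.} Your reduction is the right one: it suffices to prove $\EE[(\lfdr(p)-g^\star(p))\mathbf{1}\{g^\star(p)\in B\}]=0$ for all Borel $B$. This is exactly the identity the paper imports from Brunk (1965, Theorem~3.1); its proof is just the defining property of conditional expectation plus that citation. The problem is that your attempt to derive this identity from the projection's first-order conditions never produces an equality, and the admissibility claims it relies on are false.
\begin{itemize}
\item The map $u\mapsto u\pm\epsilon\mathbf{1}\{u\in B\}$ is nondecreasing only when $B$ is an upper set (for $+$) or a lower set (for $-$), not for general $B$.
\item Maximality of a constancy interval $I$ does not make $\pm\mathbf{1}_I$ admissible, because $g^\star$ need not jump at the ends of $I$. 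For $g^\star(p)=\max\{1/2,p\}$ and $I=[0,1/2]$, adding $t\mathbf{1}_I$ with $t>0$ breaks monotonicity.
\item The one-sided perturbations that really are admissible, $+t\mathbf{1}\{g^\star>c\}$ and $-t\mathbf{1}\{g^\star\le c\}$, give only $\EE[(\lfdr-g^\star)\mathbf{1}\{g^\star>c\}]\le 0$. You never supply the reverse inequality.
\item The Pythagoras route is correct but circular. Showing that $\tilde\psi$ is nondecreasing is equivalent to the theorem, and the pool-adjacent-violators remark is only a heuristic.
\end{itemize}

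\textbf{How to repair it.} First, replace $\mathcal{G}_\uparrow$ by the closed convex cone $K$ of all nondecreasing functions in $L_2(F)$. This loses nothing: clipping to $[0,1]$ preserves monotonicity and moves values pointwise closer to $\lfdr\in[0,1]$, so the projection onto $K$ coincides with $\lfdr_\uparrow$. It also removes your range obstruction. Then use either of two arguments.
\begin{itemize}
\item[(a)] Use two-sided perturbations $\phi(g^\star)$ with $\phi$ bounded and $L$-Lipschitz. These are admissible because $u\mapsto u+t\phi(u)$ is nondecreasing for $|t|\le 1/L$. Hence $\EE[(\lfdr-g^\star)\phi(g^\star)]=0$, and letting $\phi\uparrow\mathbf{1}\{u>c\}$ gives $\EE[(\lfdr-g^\star)\mathbf{1}\{g^\star>c\}]=0$ for every $c$.
\item[(b)] Keep your one-sided inequalities and add two identities. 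The scalings $(1\pm\epsilon)g^\star\in K$ give $\EE[(\lfdr-g^\star)g^\star]=0$, and $g^\star\pm t\in K$ gives $\EE[\lfdr-g^\star]=0$. By the layer-cake formula $g^\star=\int_0^1\mathbf{1}\{g^\star>c\}\,\mathrm{d}c$, the first identity says that the nonpositive function $c\mapsto\EE[(\lfdr-g^\star)\mathbf{1}\{g^\star>c\}]$ integrates to zero over $[0,1]$. It therefore vanishes for a.e.\ $c$, and for every $c$ by right-continuity in $c$.
\end{itemize}
In either case a $\pi$--$\lambda$ argument extends the identity from the sets $(c,\infty)$ to all Borel $B$. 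The combination of one-sided inequalities with the scaling identity in (b) is the idea missing from your sketch.
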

This result instantiates a classical ``self-consistency" property of isotonic regression, first shown by \citet{brunk1963extension,brunk1965conditional}; see \citet{arnold2025isotonic} for a modern and more general treatment.

Perfect calibration is unattainable in practice, motivating the need for quantitative measures of miscalibration and overall predictive accuracy. 
Many such measures have been proposed, including the expected calibration error \citep{naeini2015obtaining} and alternatives addressing various shortcomings \citep{blasiok2023unifying,blasiok2024smooth,okoroafor2025near,qiao25a,rossellini2025can}.
We take as our primary target the Brier score~\citep{glenn1950verification},
\begin{align}
    \BS{g} 
    = \EE[(Y - g(p))^2] 
    = \underbrace{\EE \left[\left(\EE[Y \mid g(p)] - g(p)\right)^2\right]}_{\text{calibration}} + \underbrace{\mathbb{E}\left[\text{Var}(Y\mid g(p))\right]}_{\text{refinement}},
\end{align}
which simultaneously captures calibration error and the informativeness lost by coarsening.
When predictors are restricted to a class $\mathcal{H}$, we can remove the approximation error intrinsic to~$\mathcal{H}$ with the \emph{Brier regret}
\begin{equation}\label{eq:brier-regret}
    \mathrm{Reg}_\mathcal{H}(g) = \BS{g} - \inf_{h\in \mathcal{H}}\BS{h}.
\end{equation}
For example, when $\mathcal{H}=\mathcal{G}_\uparrow$, the benchmark $\inf_{h\in \mathcal{H}}\BS{h}$ becomes $\BS{\lfdr_\uparrow}$. 
The richer the function class~$\mathcal{H}$, the more demanding the benchmark: vanishing Brier regret with respect to~$\mathcal{G}_\uparrow$ is a much stronger guarantee than vanishing regret with respect to the class of constant predictors, for example.

\subsection{Stochastic supervision via pseudo-labels}\label{sec:pseudo-labels}

The preceding discussion is formally equivalent to supervised classification, except that, crucially, the labels~$Y_i$ are \emph{unobserved} in our setting. 
This is problematic if, for example, one wishes to assess the calibration of a predictor~$g$, since such diagnostics obviously require labeled data. 
In this section, we construct pseudo-labels~$\widetilde{Y}_i$ that are computable from the $p$-values alone and serve as ready substitutes for labels~$Y_i$, allowing much of the existing arsenal of calibration methods to be brought to bear on multiple testing.

Let $p_{(1)}\le \cdots\le p_{(m)}$ denote the ordered $p$-values, with the convention~$p_{(0)} = 0$, and let $R_i$ denote the rank of $p_i$ such that $p_i = p_{(R_i)}$. Given an estimate~$\hpinull$ of~$\pinull$, define
\begin{equation}\label{eq:def-pseudo-label}
        \widetilde{Y}_{i}:= m\hpinull(p_{(R_i)} - p_{(R_i-1)}).
\end{equation}
Unlike the hidden, binary labels~$Y_i$, the pseudo-labels~$\widetilde{Y}_i$ are strictly positive, continuous random variables that may exceed one. Their role is instead to provide an alternative supervised model with the same regression target as the (hidden) binary label, namely~$\lfdr(p_i)$. The following classical limit theorem on the spacings between order statistics can be used to justify this approximation.

\begin{theorem} \citep{pyke1965spacings} 
    Let $F$ (and $f$) denote the marginal cumulative distribution function (and marginal density) of~$p$. Let $(r_m)$ be any integer sequence such that~$r_m/m\to\tau$ as~$m\to\infty$. If~$f$ is continuous and positive in a neighborhood of~$F^{-1}(\tau)$, then 
    \[
    mf(p_{(r_m)})(p_{(r_m)}-p_{(r_m-1)}) \xrightarrow{d} \text{Exponential}(1)
    \]
    as $m\to\infty$. More generally, any fixed collection of spacings with ranks converging to distinct quantiles converges jointly to independent standard exponentials.
\end{theorem}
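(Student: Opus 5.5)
The plan is the classical reduction of order statistics to uniforms and then to normalized exponential spacings, followed by a local linearization of $F^{-1}$ near the quantile $\xi_\tau := F^{-1}(\tau)$.

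\textbf{Step 1: reduce to uniforms.} Write $p_{(r)} = F^{-1}(U_{(r)})$, where $U_{(1)}\le\cdots\le U_{(m)}$ are the order statistics of an i.i.d.\ $\text{Unif}([0,1])$ sample. Since $f$ is continuous and positive near $\xi_\tau$, the function $F$ is strictly increasing there, so this representation holds with probability tending to one on the relevant event.

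\textbf{Step 2: represent uniform spacings through exponentials.} Use the R\'enyi representation
\[
U_{(r)} = \frac{E_1+\cdots+E_r}{E_1+\cdots+E_{m+1}},
\]
with $E_j$ i.i.d.\ $\text{Exponential}(1)$. This gives
\[
m\bigl(U_{(r_m)}-U_{(r_m-1)}\bigr) = \frac{E_{r_m}}{S_{m+1}/m}.
\]
By the law of large numbers $S_{m+1}/m\to1$, and $E_{r_m}$ is exactly $\text{Exponential}(1)$ for every $m$. Slutsky's lemma then gives convergence in distribution to $\text{Exponential}(1)$.

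For finitely many ranks $r_m^{(1)},\dots,r_m^{(k)}$ with distinct limits $\tau_1,\dots,\tau_k$, the indices are eventually distinct. The corresponding $E$'s are therefore independent, and the same argument yields the joint limit of independent standard exponentials.

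\textbf{Step 3: transfer to $p$ via the mean value theorem.} Since $r_m/m\to\tau$, we have $U_{(r_m)}\to\tau$ and $U_{(r_m-1)}\to\tau$ in probability. The intermediate value can be handled through Chebyshev's inequality or the representation above. By the mean value theorem,
\[
p_{(r_m)}-p_{(r_m-1)} = \frac{U_{(r_m)}-U_{(r_m-1)}}{f(\xi_m)}
\]
for some $\xi_m$ between $p_{(r_m-1)}$ and $p_{(r_m)}$, because $(F^{-1})' = 1/f\circ F^{-1}$ on the neighborhood. Both endpoints converge in probability to $\xi_\tau$, since $F^{-1}$ is continuous at $\tau$. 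Continuity of $f$ then gives
\[
\frac{f(p_{(r_m)})}{f(\xi_m)} \xrightarrow{p} 1,
\]
using $f(\xi_\tau)>0$. Another application of Slutsky's lemma yields
\[
m f(p_{(r_m)})\bigl(p_{(r_m)}-p_{(r_m-1)}\bigr) \xrightarrow{d} \text{Exponential}(1).
\]
The joint statement follows by applying the same argument coordinatewise. The multiplicative corrections converge in probability to constants, so joint Slutsky applies.

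\textbf{Main obstacle.} The delicate point is the localization in Step 3. I need to guarantee that, with probability tending to one, both $U_{(r_m-1)}$ and $U_{(r_m)}$ lie in the interval where $F^{-1}$ is differentiable with derivative $1/f$. This follows from consistency of central order statistics, $U_{(r_m)}-r_m/m = O_p(m^{-1/2})$, but it should be stated carefully. The reason is that the hypothesis only controls $f$ in a neighborhood of $\xi_\tau$, and $f$ may vanish or be discontinuous elsewhere.

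If $\tau\in\{0,1\}$, the neighborhood is one-sided, and the same argument applies with one-sided continuity.
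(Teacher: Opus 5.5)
Your argument is correct. The paper does not prove this statement; it is quoted from \citet{pyke1965spacings}. Your route is the classical argument behind that citation: the exact identity $m(U_{(r)}-U_{(r-1)}) \stackrel{d}{=} E_r/(S_{m+1}/m)$ for uniform spacings, followed by a local linearization of $F^{-1}$ around $F^{-1}(\tau)$ on an event of probability tending to one. So there is no alternative proof in the paper to compare against.

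Two small refinements:

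\begin{itemize}
\item In Step~1 no localization is needed. Since $F$ is continuous, the $U_i=F(p_i)$ are exactly i.i.d.\ uniform and $p_i=F^{-1}(U_i)$ almost surely. The localization is genuinely needed only in Step~3, as you note under ``Main obstacle''.
\item You never need the mean-value point $\xi_m$ to be a random variable. The measurable ratio $(p_{(r_m)}-p_{(r_m-1)})/(U_{(r_m)}-U_{(r_m-1)})$ is pointwise sandwiched between the infimum and supremum of $1/f(F^{-1}(u))$ over $[U_{(r_m-1)},U_{(r_m)}]$. Both bounds converge in probability to $1/f(F^{-1}(\tau))$.
\end{itemize}
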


Consequently, if~$\hpinull\xrightarrow{p} \pinull$, then the pseudo-labels approximately satisfy a regression model with multiplicative errors that are exponentially distributed:
\begin{equation}\label{eq:pseudo-exp-model}
    \widetilde{Y}_i\mid p_i\stackrel\cdot\sim \lfdr(p_i)\,E_i,\qquad E_i\sim\mathrm{Exp}(1),
\end{equation}
such that 
\[
\mathbb{E}\left[\widetilde{Y}_i\mid p_i\right] \approx \lfdr(p_i),
\]
matching the regression function in the binary classification model~\eqref{eq:lfdr}. We emphasize that~\eqref{eq:pseudo-exp-model} is not meant to serve as an exact probabilistic model: the spacings between order statistics are globally constrained (not independent), and the exponential errors are only exact in the limit as $m\to\infty$.

The key property of pseudo-labels is that they can substitute for the true labels for estimating any functional~$\mathbb{E}[\Psi(p, Y)]$ of the joint distribution~$(p, Y)$. 
Since~$Y$ is binary, any such function~$\Psi$ is linear in~$Y$, so it suffices to show that weighted averages of the form~$\frac{1}{m}\sum_{i=1}^m \psi(p_i)\widetilde{Y}_i$ and~$\frac{1}{m}\sum_{i=1}^m \psi(p_i)Y_i$ share the same limit.
Let~$\psi : [0,1]\to\mathbb{R}$ be a Riemann integrable test function. Then, by construction,
\[
    \frac{1}{m}\sum_{i=1}^m \psi(p_i)\widetilde{Y}_i
    = \hpinull \sum_{r=1}^m \psi(p_{(r)})(p_{(r)} - p_{(r-1)})\xrightarrow{p} \pinull\int_0^1 \psi(u)\,\text{d}u.
\]
On the other hand, using the law of large numbers and uniformity under the null~\eqref{eq:uniform-nulls},
\[
    \frac{1}{m}\sum_{i=1}^m \psi(p_i)Y_i = \frac{1}{m}\sum_{i=1}^m \psi(p_i)1\{Y_i=1\} \xrightarrow{p} \pinull \mathbb{E}[\psi(p)\mid Y=1] = \pinull\int_0^1\psi(u)\,\text{d}u.
\]
The limits agree, so the pseudo-labels serve as drop-in replacements for the true labels in standard calibration tools devised for supervised settings, such as reliability diagrams, miscalibration measures, and post-hoc calibration methods.
Of course, the null integral $\int_0^1\psi(u)\,\text{d}u$ could also be computed directly, and this may be advantageous to using~$\frac{1}{m}\sum_{i=1}^m \psi(p_i)\widetilde{Y}_i$. 
The point is that this framing allows us to import the full toolkit of supervised calibration into multiple testing, with pseudo-labels bridging the gap created by hidden labels.

\section{Isotonic calibration and the Grenander estimator} \label{sec:post-hoc}

In this section, we study the monotone post-hoc calibration problem in multiple testing.
The monotonicity constraint encodes the basic principle that smaller $p$-values should provide stronger evidence against the null.
We discuss several ways to enforce this requirement: as isotonic calibration applied to the pseudo-labels, as maximum likelihood in the pseudo-label regression model, or a nonparametric empirical Bayes approach based on estimating the marginal density of the $p$-values.
We find that, despite their different motivations, these three methods are actually equivalent. 
We then establish finite-sample regret bounds for the resulting monotone predictor.

\subsection{Isotonic calibration from pseudo-labels}

The most direct approach to monotone calibration is ordinary isotonic regression on the pseudo-labels~$(\widetilde{Y}_{i})$. The first step is to sort the responses~$(\widetilde{Y}_{i})$ according to the covariate~$(p_i)$. 
With $p_{(r)}$ denoting the $r^\text{th}$ smallest $p$-value, the corresponding pseudo-label is~$\widetilde{Y}_{(r)} = m\hpinull(p_{(r)} - p_{(r-1)})$. 
For the moment, we work with the fitted values at the observed $p$-values. Write $\theta_r = g(p_{(r)})$ for $r = 1,\ldots,m$. A monotone predictor~$g\in \mathcal{G}_\uparrow$ corresponds to a nondecreasing sequence~$\theta_1\le\cdots\le\theta_m$, so the isotonic projection is 
\begin{align}\label{eq:pava}
    \widehat\theta^{\text{BS}}  = \argmin_{0\le \theta_1\le\cdots\le \theta_m\le 1} \frac{1}{m}\sum_{r=1}^m \left(\theta_r-\widetilde{Y}_{(r)}\right)^2.
\end{align}
The projection~$\widehat\theta$ can be computed in $O(m)$ time using the pool adjacent violators algorithm \citep{barlow1972statistical,grotzinger1984projections}. This procedure takes as input the pseudo-labels $\widetilde{Y}_{(1)},\ldots,\widetilde{Y}_{(m)}$, sorted according to $p_{(1)}\le\cdots\le p_{(m)}$, iteratively pooling the values over neighboring blocks whose fitted averages violate monotonicity, reporting the average pseudo-label within each final block.

If consecutive ranks~$g,\ldots,d\in \{1,\ldots,m\}$ are pooled, the fitted value on that block (before truncating at one) is
\begin{align}\label{eq:block-avg}
\widetilde{Y}_{g:d}
= \frac{1}{d-g+1}\sum_{r=g}^d \widetilde{Y}_{(r)}
= \frac{\hpinull m(p_{(d)} - p_{(g-1)})}{d-g+1}.
\end{align}
This expression has a familiar interpretation in multiple testing: it is the canonical plug-in estimate of the false discovery proportion (FDP) among $p$-values in the interval $(p_{(g-1)}, p_{(d)}]$. The numerator $\hpinull m(p_{(d)} - p_{(g-1)})$ estimates the expected number of nulls in that interval under the assumption that nulls are uniformly distributed, while the denominator simply counts the total number of $p$-values in that interval. Isotonic calibration thus pools adjacent $p$-values into intervals on which the local FDP estimate is monotone, and reports that estimate as the calibrated score.

The squared-error criterion should be thought of as a plug-in estimate of the Brier score. From this perspective, $\widehat\theta^{\text{BS}}$ is a straightforward application of the classical approach to isotonic calibration \citep{zadrozny2002transforming}.

\subsection{Maximum likelihood for the pseudo-label regression}

Equation~\eqref{eq:pseudo-exp-model} suggests a working probability model for the pseudo-labels with multiplicative errors that follow a standard exponential distribution. This motivates a likelihood-based approach to estimating the regression function~$\lfdr$. For a predictor~$g$, the pseudolikelihood is $\prod_{i=1}^m \frac{1}{g(p_i)} \exp\left(-\frac{\widetilde{Y}_i}{g(p_i)}\right)$. Up to additive constants not depending on $g$, the negative log pseudolikelihood is
\[\sum_{i=1}^m \left(\frac{\widetilde{Y}_i}{g(p_i)} -\log \frac{\widetilde{Y}_i}{g(p_i)} - 1\right),\]
which is a Bregman divergence known as the \emph{Itakura--Saito distance} \citep{banerjee2005clustering}. The corresponding monotone-constrained estimator is
\begin{align}\label{eq:maximum-pseudolikelihood}
        \widehat\theta^{\text{IS}} = \argmin_{0\le\theta_1\le\cdots\le\theta_m\le 1} \frac{1}{m}\sum_{r=1}^m \left(\frac{\widetilde{Y}_{(r)}}{\theta_r} - \log \frac{\widetilde{Y}_{(r)}}{\theta_r} -1\right).
    \end{align}
Although we focus on constraining the fitted values by the monotone class $\mathcal{G}_\uparrow$, the same likelihood approach could apply to other functional forms for $g$, such as splines or sigmoid functions \citep{platt1999probabilistic}.

\subsection{Empirical Bayes estimation of the lfdr}

Instead of performing regression on the pseudo-labels, the empirical Bayes approach usually starts by observing that, by Bayes' rule,
\[
\lfdr(t) = \frac{\pinull}{f(t)}.
\]
Estimating the $\lfdr$ can thus be reduced to separately estimating the null proportion~$\pinull$ and the marginal density~$f$ of the $p$-values.
Imposing monotonicity of the $\lfdr$ corresponds to requiring~$f$ to be a nonincreasing density on $[0,1]$, and the maximum likelihood estimator of a nonincreasing density is known as the Grenander estimator~\citep{grenander1956theory}:
\[
\widehat f_m \in \argmax_{h\in \mathcal{F}_\downarrow} \frac{1}{m}\sum_{i=1}^m \log h(p_i),
\]
where $\mathcal{F}_\downarrow = \{h : [0,1]\to\mathbb{R}_+ : \int h = 1, h(u)\ge h(v) \text{ for all } u\le v\}$. The Grenander estimator has been extensively studied as a shape-constrained density estimator \citep[see, e.g.,][and references therein]{groeneboom2014nonparametric,samworth2026nonparametric}. The plug-in estimator of the isotonized $\lfdr$ is then
\begin{align}\label{eq:npeb}
    \widehat\theta_r^\text{EB} = \min\left\{\frac{\hpinull}{\widehat f_m(p_{(r)})}, 1\right\}.
\end{align}
This empirical Bayes view connects directly to recent work on lfdr estimation under monotonicity. \citet{strimmer2008unified} first proposed using $\widehat f_m$ in the context of multiple testing. \citet{soloff2024edge} prove finite-sample guarantees for the \emph{support line procedure}, which rejects the null hypotheses corresponding to the $R$ smallest $p$-values, where $R$ is the last index~$k$ such that $\widehat\theta_k^\text{EB}\le \alpha$. 

\subsection{Equivalence}

\begin{theorem}\label{thm:equivalence} Fix any estimator $\hpinull\in (0,1]$ of $\pinull$.
The estimators defined by \eqref{eq:pava}, \eqref{eq:maximum-pseudolikelihood}, and \eqref{eq:npeb} coincide.
\end{theorem}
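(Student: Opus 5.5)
The three estimators will be shown to have the same explicit block-average form. The key tool is the standard fact about isotonic regression: for weighted least squares over the isotonic cone, and more generally for minimization of any separable Bregman divergence $\sum_r D_\phi(y_r,\theta_r)$ over $\theta_1\le\cdots\le\theta_m$, the minimizer is the same. It is the left derivative of the greatest convex minorant of the cumulative sum diagram $\{(r,\sum_{s\le r}y_s)\}$, i.e.\ the min--max block averages $\min_{d\ge r}\max_{g\le r}\widetilde Y_{g:d}$ produced by PAVA (Barlow et al.; Robertson--Wright--Dykstra). I will use this in the order: (i) handle the unconstrained-above problem, (ii) handle the upper bound $\theta\le 1$, (iii) identify the Grenander estimator.

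\emph{Step 1 (Brier vs.\ Itakura--Saito).} Both criteria are Bregman divergences $D_\phi(\widetilde Y,\theta)$, with $\phi(y)=y^2$ and $\phi(y)=-\log y$ respectively. For the isotonic problem without the upper bound, I will verify the KKT characterization directly. Let $\bar\theta$ be the PAVA solution. On each pooled block $\{g,\dots,d\}$, $\bar\theta$ equals the block mean $\widetilde Y_{g:d}$. The block's stationarity condition is $\sum_{r=g}^d \phi''(\theta)(\theta-\widetilde Y_{(r)})=0$ (the Bregman gradient in $\theta$ is $\phi''(\theta)(\theta-y)$), and it is solved by the mean for every strictly convex $\phi$. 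The partial-sum inequalities within blocks, namely $\sum_{r=g}^{k}(\widetilde Y_{(r)}-\bar\theta_r)\ge 0$, are also independent of $\phi$, since $\phi''(\bar\theta)>0$ is constant on the block. Since the objective is convex in $\theta$ for the Brier score, and the Itakura--Saito criterion is convex in $\theta$ for $\theta>0$ (its Hessian is $\widetilde Y/\theta^3\cdot 2 - 1/\theta^2$, which can fail). This is the step I expect to be the main obstacle. For this reason I will instead use the reparametrization $\eta=1/\theta$. The Itakura--Saito objective is then $\eta\widetilde Y-\log\eta$, which is convex, and the monotone constraint on $\theta$ becomes $\eta_1\ge\cdots\ge\eta_m$. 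KKT for this convex problem then certifies the block-mean solution, using the reciprocal-of-mean block structure. Spacings are a.s.\ positive, so $\widetilde Y_{(r)}>0$ and all means are positive.

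\emph{Step 2 (the cap at $1$).} I will show that for both criteria the constrained solution is $\min\{\bar\theta_r,1\}$. For squared loss this is the known fact that projecting onto $\{\theta\uparrow,\theta\le1\}$ equals truncating the isotonic fit. For Itakura--Saito I will check KKT in the $\eta$ parametrization with constraint $\eta\ge1$. The truncation preserves block structure below level $1$, and on the indices where $\bar\theta_r>1$ the multipliers have the right sign because those indices form a terminal segment whose partial sums exceed the boundary.

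\emph{Step 3 (Grenander).} The Grenander estimator is the left derivative of the least concave majorant of the empirical cdf $F_m$. Its value at $p_{(r)}$ is therefore $\max_{g\le r}\min_{d\ge r}\frac{(d-g+1)/m}{p_{(d)}-p_{(g-1)}}$. Taking reciprocals and multiplying by $\hpinull$ gives $\min_{g\le r}\max_{d\ge r}\widetilde Y_{g:d}$ via \eqref{eq:block-avg}. This equals the min--max formula for the isotonic regression of $\widetilde Y_{(r)}$, because the points $(p_{(r)},r/m)$ and the cumulative-sum diagram of spacings are the same polygon with coordinates swapped, so the concave majorant of one is the convex minorant of the other. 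Truncating at $1$ matches \eqref{eq:npeb}, which completes the chain.
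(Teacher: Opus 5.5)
Your plan uses the same two ingredients as the paper. The paper simply cites Theorem~1.5.1 of \citet{rdw1988order} (isotonic regression minimizes every Bregman divergence simultaneously) and their Section~7.2 (Grenander estimator versus isotonic regression). You prove both directly, and you also handle the cap $\theta\le 1$ explicitly, which the citations leave implicit. Steps~1 and~2 are sound:
\begin{itemize}
\item In the variable $\eta=1/\theta$ the Itakura--Saito problem is convex.
\item Its KKT multipliers $\lambda_k=\sum_{r\le k}(\widetilde Y_{(r)}-1/\eta_r)$ coincide with the least-squares ones.
\item With the cap, they equal $\sum_{r=K+1}^{k}(\widetilde Y_{(r)}-1)>0$ for $k>K:=\max\{r:\bar\theta_r\le 1\}$, where $K$ is a PAVA block boundary.
\end{itemize}
Do delete the clause asserting that the Itakura--Saito criterion is convex in $\theta$. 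Your own second derivative $(2\widetilde Y-\theta)/\theta^3$ shows it is not, which is exactly why the reparametrization is needed. The three-point Bregman identity behind the cited theorem avoids convexity in $\theta$ altogether.

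Step~3, however, contains two errors that happen to cancel.
\begin{itemize}
\item The left derivative at $p_{(r)}$ of the least concave majorant of $F_m$ is $\widehat f_m(p_{(r)})=\min_{g\le r}\max_{d\ge r}\frac{(d-g+1)/m}{p_{(d)}-p_{(g-1)}}$. Your order, $\max_{g\le r}\min_{d\ge r}$, instead gives the slope of the greatest convex minorant of the same points, i.e.\ a nondecreasing fit.
\item Likewise, $\min_{g\le r}\max_{d\ge r}\widetilde Y_{g:d}$ is the antitonic fit of the pseudo-labels. It is not the isotonic value $\min_{d\ge r}\max_{g\le r}\widetilde Y_{g:d}$, because it matters which index the outer operation ranges over.
\end{itemize}
For example, take $m=2$, $p_{(1)}=0.1$, $p_{(2)}=0.5$, $\hpinull=1$. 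Your formulas give $\widehat f_m(p_{(1)})=2$ rather than $5$, and $0.5$ rather than the isotonic value $0.2$. The correct chain is $\hpinull/\widehat f_m(p_{(r)})=\max_{g\le r}\min_{d\ge r}\widetilde Y_{g:d}$, which is the isotonic min--max formula.

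Your coordinate-swap argument is correct and suffices by itself: reflecting $\{(0,0),(p_{(r)},r/m)\}$ across the diagonal turns the least concave majorant into the greatest convex minorant of the rescaled cumulative-sum diagram and inverts slopes. So the repair is just to fix or drop the displayed formulas.

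A shortcut is also available. Substitute $f_r=\hpinull\eta_r$. Then the uncapped problem $\min\sum_r(\eta_r\widetilde Y_{(r)}-\log\eta_r)$ over nonincreasing $\eta>0$ is exactly the Grenander log-likelihood over step functions with knots at the data, with the constraint $\int f=1$ dualized by the multiplier $m$ (the optimal scale automatically normalizes). Combined with your Step~2, this gives \eqref{eq:maximum-pseudolikelihood}$=$\eqref{eq:npeb} with no min--max formulas at all.
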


We define the isotonic calibration estimator~$\widehat\lfdr_\uparrow : [0,1]\to[0,1]$ as the right-continuous step function that is constant between fitted values:
\begin{align}
    \widehat \lfdr_\uparrow(t) &= \begin{cases}
        \widehat{\theta}_1, & 0\le t\le p_{(1)}, \\
        \widehat{\theta}_2, & p_{(1)}< t\le p_{(2)}, \\
        ~\vdots \vspace{.2em}  \\ 
        \widehat{\theta}_m, & p_{(m-1)}< t\le p_{(m)}, \\
        1, & p_{(m)}< t\le 1.
    \end{cases}
\end{align}
Theorem~\ref{thm:equivalence} combines two classical results into a single statement, with different notation for our context. The equivalence between \eqref{eq:pava} and \eqref{eq:maximum-pseudolikelihood} uses the fact that isotonic regression simultaneously minimizes all Bregman divergences---see, e.g., Theorem 1.5.1 of \citet{rdw1988order}. The equivalence between \eqref{eq:pava} and \eqref{eq:npeb} follows from a well-known connection between isotonic regression and the Grenander estimator~$\widehat f_m$---see, e.g., Section~7.2 of \citet{rdw1988order}. The importance of this observation is thus not in its novelty, but in the connection it establishes between a nonparametric empirical Bayes method and performing post-hoc calibration directly on the pseudo-labels.

\subsection{Brier regret guarantee}

We now state a finite-sample guarantee on the Brier regret of $\widehat\lfdr_\uparrow$ over the monotone class $\mathcal{G}_\uparrow$.

\begin{theorem} \label{thm:regret-conv} Suppose $(p_i, Y_i)$ follow an i.i.d. two-groups model with $p_i\mid Y_i=1\sim \text{Unif}([0,1])$. For any $(0,1]$-valued estimator
$\hpinull$ of the null proportion $\pinull$, the Brier regret~\eqref{eq:brier-regret} over~$\mathcal{G}_\uparrow$ satisfies
\[
\EE\left[\mathrm{Reg}_{\mathcal{G}_\uparrow}(\widehat\lfdr_\uparrow)\right]
\le
\sqrt{\frac{2\pi}{m}}
+
2\,\mathbb E\left|\hpinull-\pinull\right|.
\]
In particular, if $\hpinull\xrightarrow{L_1}\pinull$, then isotonic calibration has asymptotically vanishing Brier regret.
\end{theorem}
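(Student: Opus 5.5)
The plan is to recast both the population Brier score and the isotonic fit as minimizers of quadratic functionals that differ only through $F$ versus the empirical cdf $F_m$ and $\pinull$ versus $\hpinull$. The regret then reduces to a uniform deviation bound over $\mathcal{G}_\uparrow$, which we control with the Dvoretzky--Kiefer--Wolfowitz (DKW) inequality.

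\emph{Step 1 (population functional).} For any predictor $g$, expand $\BS{g}=\EE[g(p)^2]-2\EE[g(p)Y]+\pinull$. Conditioning on $p$ and using \eqref{eq:uniform-nulls},
\[
\EE[g(p)Y]=\pinull\,\EE[g(p)\mid Y=1]=\pinull\int_0^1 g(u)\,\diff u .
\]
Hence $\BS{g}=R(g)+\pinull$, where
\[
R(g):=\int g^2\,\diff F-2\pinull\int_0^1 g(u)\,\diff u .
\]
Consequently $\mathrm{Reg}_{\mathcal{G}_\uparrow}(g)=R(g)-R(\lfdr_\uparrow)$, because $\lfdr_\uparrow$ attains $\inf_{h\in\mathcal{G}_\uparrow}\BS{h}$.

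\emph{Step 2 (empirical functional).} Define
\[
R_m(g):=\int g^2\,\diff F_m-2\hpinull\int_0^1 g(u)\,\diff u .
\]
I claim that $\widehat\lfdr_\uparrow$ minimizes $R_m$ over all of $\mathcal{G}_\uparrow$.
\begin{itemize}
\item For any $g\in\mathcal{G}_\uparrow$, let $\bar g$ be the step function equal to $g(p_{(r)})$ on $(p_{(r-1)},p_{(r)}]$ and equal to $1$ on $(p_{(m)},1]$. Then $\bar g$ is monotone and $[0,1]$-valued, and it has the same values at the data, so the $\diff F_m$ term is unchanged.
\item By monotonicity and $g\le 1$, we have $\bar g\ge g$ pointwise, so $\int\bar g\ge\int g$ and $R_m(\bar g)\le R_m(g)$.
\item On such step functions, with $\theta_r=\bar g(p_{(r)})$,
\[
R_m(\bar g)=\frac1m\sum_{r}\theta_r^2-\frac{2}{m}\sum_r\theta_r\widetilde Y_{(r)}-2\hpinull(1-p_{(m)}).
\]
Up to terms not depending on $\theta$, this is the criterion in \eqref{eq:pava}.
\end{itemize}
So $\widehat\lfdr_\uparrow$, which is the \eqref{eq:pava} fit extended as in its definition, is a global minimizer of $R_m$ on $\mathcal{G}_\uparrow$. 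This is the step I expect to need the most care: the endpoints, the tail value $1$, and the convention $p_{(0)}=0$ must all be checked. Theorem~\ref{thm:equivalence} is not needed here.

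\emph{Step 3 (regret decomposition).} Write $\hat g=\widehat\lfdr_\uparrow$ and $g^*=\lfdr_\uparrow$. Since $R_m(\hat g)\le R_m(g^*)$,
\[
\mathrm{Reg}_{\mathcal{G}_\uparrow}(\hat g)\le\bigl[R(\hat g)-R_m(\hat g)\bigr]-\bigl[R(g^*)-R_m(g^*)\bigr].
\]
Split the right-hand side into two pieces.
\begin{itemize}
\item \emph{Measure term:} $\int(\hat g^2-g^{*2})\,\diff(F-F_m)$.
\item \emph{Null-proportion term:} $2(\hpinull-\pinull)\bigl(\int\hat g-\int g^*\bigr)$. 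Since both integrals lie in $[0,1]$, this is at most $2|\hpinull-\pinull|$ in absolute value.
\end{itemize}
For the measure term, take any nondecreasing $h$ with values in $[0,1]$ and use the layer-cake formula $\int h\,\diff\mu=\int_0^1\mu(\{h>s\})\,\diff s$ with $\mu=F-F_m$. Each level set is an interval of the form $(a,1]$ or $[a,1]$, so $|\mu(\{h>s\})|\le D_m:=\sup_t|F_m(t)-F(t)|$. Here left limits are handled by $F$ being continuous and $D_m$ also controlling $F_m(t^-)-F(t)$. Applying this to $\hat g^2$ and $g^{*2}$, which are both monotone and $[0,1]$-valued, bounds the measure term by $2D_m$. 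The randomness of $\hat g$ causes no trouble because the bound is uniform over the class.

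\emph{Step 4 (expectation via DKW).} The DKW inequality with Massart's constant gives $\mathbb P(D_m>t)\le 2e^{-2mt^2}$. Therefore
\[
\EE D_m\le\int_0^\infty 2e^{-2mt^2}\,\diff t=\sqrt{\frac{\pi}{2m}} .
\]
Taking expectations in Step 3 yields
\[
\EE\bigl[\mathrm{Reg}_{\mathcal{G}_\uparrow}(\widehat\lfdr_\uparrow)\bigr]\le 2\sqrt{\frac{\pi}{2m}}+2\,\EE|\hpinull-\pinull|=\sqrt{\frac{2\pi}{m}}+2\,\EE|\hpinull-\pinull| ,
\]
which is exactly the stated constant. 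The final claim follows at once: if $\hpinull\to\pinull$ in $L_1$, both terms vanish as $m\to\infty$.
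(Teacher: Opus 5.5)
Your proposal is correct and is essentially the paper's proof. It uses the same rewriting $\BS{g}=\pinull+\int g^2\,\text{d}F-2\pinull\int_0^1 g$, and the same lemma that $\widehat\lfdr_\uparrow$ minimizes $\int g^2\,\text{d}F_m-2\hpinull\int_0^1 g$ over all of $\mathcal{G}_\uparrow$ (your upper step envelope $\bar g$ is the paper's Riemann-sum comparison with the tail term $1-p_{(m)}$), followed by the same decomposition via $R_m(\widehat\lfdr_\uparrow)\le R_m(\lfdr_\uparrow)$ and DKW with Massart's constant. The only difference is cosmetic: you bound $\int g^2\,\text{d}(F-F_m)$ with the layer-cake formula instead of the paper's Stieltjes integration by parts, which is slightly cleaner because it avoids tracking the jumps that $g^2$ and $F_m$ share at the data points.
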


We emphasize that the bound holds \emph{regardless of whether $\lfdr_\uparrow = \lfdr$;} that is, regardless of whether the true $\lfdr$ is monotone. 
An appealing feature of the result is that the bound separates the error from estimating the null proportion $\pinull$ from the remaining estimation error. A common working assumption in multiple testing is sparsity, meaning $\pinull\approx 1$, often handled by conservatively setting $\hpinull = 1$. Our bound shows that this choice incurs only a bias of $2(1-\pinull)$, which vanishes in the sparse limit $\pinull\to 1$ as $m\to\infty$.

We prove Theorem~\ref{thm:regret-conv} in Appendix~\ref{sec:regret-proof}. 
The bound does not immediately follow from standard empirical process theory: the pseudo-labels are neither independent nor identically distributed. We sidestep this issue by first showing that the isotonic calibration~$\widehat\lfdr_\uparrow$ can in fact be written as the solution to a plug-in estimate of the Brier score that only depends on the $p$-values through their empirical cdf.
The result then follows from standard empirical process theory; in fact $\sqrt{\frac{2\pi}{m}}$ comes directly from the Dvoretzky--Kiefer--Wolfowitz (DKW) inequality on~$\|F_m-F\|_\infty$. The assumption that the $p$-values are i.i.d. is not strictly necessary; for vanishing regret, it instead suffices that the empirical cdf $F_m$ converges uniformly to~$F$ and $\hpinull\xrightarrow{L_1}\pinull$.

\subsection{From calibrated scores to actionable decisions}

The Brier score evaluates the overall accuracy of $g(p)$ as a probability forecast for predicting $Y$. 
In practice, however,
probability forecasts are often used as an intermediate step to making a dichotomous decision, i.e., acting as though $Y=0$ or $Y=1$.
In hypothesis testing, the decision is whether to reject the null hypothesis.

Any decision-maker implicitly faces a tradeoff between false positives (erroneously rejecting the null when it is true) and false negatives (erroneously failing to reject the null when it is false). 
Let $\alpha$ denote the relative cost of a false negative, so that $1-\alpha$ is the relative cost of a false positive. 
Under this asymmetric loss function, the Bayes-optimal decision is to reject the null hypothesis whenever $\lfdr(p)\le \alpha$ \citep{sun2007oracle}. 
This motivates evaluating a predictor $g$ by its weighted classification risk
\begin{align}\label{eq:decision-loss}
    \mathcal{R}_\alpha(g) = \alpha\,\mathbb P\{g(p)>\alpha,\,Y=0\} + (1-\alpha)\,\mathbb P\{g(p)\leq \alpha,\,Y=1\}.
\end{align}
Thus, $\mathcal{R}_\alpha(g)$ evaluates the quality of the decisions obtained by acting on the forecast~$g$ with a specific relative cost $\alpha$. The Brier score aggregates these cost-specific binary decision problems. The two notions are connected via the Schervish representation \citep{savage1971elicitation,schervish1989general}
\[
\BS{g} = 2\int_0^1 \mathcal{R}_\alpha(g)\,\text{d}\alpha.
\]
The Brier regret guarantee of \Cref{thm:regret-conv} therefore controls the excess weighted classification risk of $\widehat{\lfdr}_\uparrow$ with respect to a cost parameter $\alpha$ selected uniformly at random. An individual decision-maker, however, typically faces a particular tradeoff, so it is useful to study the fixed $\alpha$ risk directly.

For any monotone predictor $g\in \mathcal{G}_\uparrow$, the rejection region $\{p : g(p)\le \alpha\}$ is an interval of the form $[0, \tau]$, where $\tau = \sup\{t : g(t)\le \alpha\}$. 
Define the weighted classification risk of the threshold~$\tau$ as
\begin{align*}
    R_\alpha(\tau) 
    &= \alpha (1-\pinull) + \pinull\,\tau -\alpha F(\tau)
\end{align*}
such that $R_\alpha(\tau) = \mathcal{R}_\alpha(g)$. Consequently, the optimal monotone decision rule is obtained by minimizing $R_\alpha(\tau)$ over thresholds $\tau\in [0,1]$. Replacing $F$ and $\pinull$ by their empirical counterparts (and ignoring the additive constant not depending on $t$) yields the empirical risk 
\[
\hat{R}_\alpha(t) = \hpinull\,t - \alpha\, F_m(t).
\]
Selecting the threshold $\hat\tau$ to minimize this empirical risk is exactly the Support Line procedure at level~$\alpha/\hpinull$. \citet{soloff2024edge} show that the selected threshold corresponds to thresholding the Grenander estimator:
\[\hat\tau = \sup\{t : \widehat\lfdr_\uparrow(t)\le \alpha\}.\]
The following theorem shows that $\hat\tau$ has low weighted classification regret over the monotone class~$\mathcal{G}_\uparrow$.

\begin{theorem}\label{thm:actionable}
    Suppose $(p_i, Y_i)$ follow an i.i.d. two-groups model with $p_i\mid Y_i=1\sim \text{Unif}([0,1])$. For any $(0,1]$-valued estimator
    $\hpinull$ of the null proportion $\pinull$, 
    \[
    \mathbb{E}\left[\mathcal{R}_\alpha(\widehat\lfdr_\uparrow)\right] - \inf_{g\in \mathcal{G}_\uparrow}\mathcal{R}_\alpha(g) 
    \le \alpha\sqrt{\frac{2\pi}{m}} + 2\,\mathbb E\left|\hpinull-\pinull\right|.
    \]
\end{theorem}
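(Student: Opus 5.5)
Since every monotone predictor induces a rejection region of the form $[0,\tau]$, the risk $\mathcal{R}_\alpha(g)$ depends on $g\in\mathcal{G}_\uparrow$ only through $\tau_g=\sup\{t: g(t)\le\alpha\}$, and $\mathcal{R}_\alpha(g)=R_\alpha(\tau_g)$. Hence $\inf_{g\in\mathcal{G}_\uparrow}\mathcal{R}_\alpha(g)=\inf_{\tau\in[0,1]}R_\alpha(\tau)$: any $\tau$ is realized, e.g., by $g=\alpha\,1\{t\le\tau\}+1\{t>\tau\}$. Fix a minimizer $\tau^*$ of $R_\alpha$. It exists because $R_\alpha$ is lower semicontinuous: $F$ is continuous, so there are no atoms, and $R_\alpha$ is in fact continuous. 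Since $\widehat\lfdr_\uparrow$ is monotone with threshold $\hat\tau$, the left-hand side equals $\mathbb{E}[R_\alpha(\hat\tau)-R_\alpha(\tau^*)]$. There is one subtlety: $\hat\tau$ is random and depends on the same data, but $\mathcal{R}_\alpha(\widehat\lfdr_\uparrow)$ is by definition evaluated on a fresh $(p,Y)$, so the identity holds conditionally on the sample.

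The core step is the standard excess-risk decomposition for empirical risk minimization. Write $R_\alpha(\tau)=\alpha(1-\pinull)+\pinull\tau-\alpha F(\tau)$ and $\hat R_\alpha(\tau)=\hpinull\tau-\alpha F_m(\tau)$. We need that $\hat\tau$ minimizes $\hat R_\alpha$, which follows from the Support Line characterization of \citet{soloff2024edge} quoted above. Then
\[
R_\alpha(\hat\tau)-R_\alpha(\tau^*)=\bigl[R_\alpha(\hat\tau)-\alpha(1-\pinull)-\hat R_\alpha(\hat\tau)\bigr]+\bigl[\hat R_\alpha(\hat\tau)-\hat R_\alpha(\tau^*)\bigr]+\bigl[\hat R_\alpha(\tau^*)-R_\alpha(\tau^*)+\alpha(1-\pinull)\bigr].
\]
The middle bracket is $\le 0$. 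Each outer bracket has the form $\pm\bigl[(\pinull-\hpinull)\tau-\alpha(F(\tau)-F_m(\tau))\bigr]$ with $\tau\in[0,1]$, so each is at most $|\hpinull-\pinull|+\alpha\|F_m-F\|_\infty$. This gives the pointwise bound
\[
R_\alpha(\hat\tau)-R_\alpha(\tau^*)\le 2\alpha\|F_m-F\|_\infty+2|\hpinull-\pinull|.
\]

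To tighten the constant, I would not bound the two empirical process terms separately. Instead I would combine them as $\alpha\bigl[(F_m-F)(\hat\tau)-(F_m-F)(\tau^*)\bigr]$, which is at most $\alpha\sup_{s,t}|(F_m-F)(s)-(F_m-F)(t)|$. Then I would take expectations using the DKW inequality, which gives $\mathbb{P}\{\sqrt m\|F_m-F\|_\infty>x\}\le 2e^{-2x^2}$ and hence $\mathbb{E}\|F_m-F\|_\infty\le\sqrt{\pi/(2m)}$. This yields $2\alpha\sqrt{\pi/(2m)}=\alpha\sqrt{2\pi/m}$, which matches the stated constant exactly. The $\pinull$ terms contribute $(\pinull-\hpinull)(\hat\tau-\tau^*)$, which is bounded by $|\hpinull-\pinull|\le 2|\hpinull-\pinull|$, so that part of the stated bound has slack.

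The main obstacle is the ERM step: verifying that $\hat\tau=\sup\{t:\widehat\lfdr_\uparrow(t)\le\alpha\}$ genuinely minimizes $\hat R_\alpha$ over all of $[0,1]$, including ties and the truncation at $1$. I would handle this by noting that $\hat R_\alpha$ is piecewise linear with decreasing jumps at the $p_{(r)}$. Its minimizers are therefore vertices of the least concave majorant of $F_m$ where the slope crosses $\hpinull/\alpha$, which is exactly the Grenander threshold via Theorem~\ref{thm:equivalence}. The edge case $\alpha\ge\hpinull$, where $\hat\tau=1$, needs to be checked separately.
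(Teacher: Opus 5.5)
Your proposal is correct and follows the paper's proof essentially step for step: the three-term ERM decomposition using that $\hat\tau$ minimizes $\hat R_\alpha$ (via the Support Line characterization), the bound $2\sup_t|R_\alpha-\hat R_\alpha|\le 2|\hpinull-\pinull|+2\alpha\|F_m-F\|_\infty$, and DKW with $\mathbb{E}\|F_m-F\|_\infty\le\sqrt{\pi/(2m)}$. Regrouping the $\pinull$ terms into $(\pinull-\hpinull)(\hat\tau-\tau^*)$ validly sharpens the second term to $\mathbb{E}|\hpinull-\pinull|$, while regrouping the $F_m-F$ terms leaves the constant unchanged; your flagged edge case is misstated, though, since for $\alpha<1$ one always has $\hat\tau\le p_{(m)}$, because $\widehat\lfdr_\uparrow=1$ on $(p_{(m)},1]$, consistent with $\hat R_\alpha(t)=\hpinull t-\alpha$ being increasing there.
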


 This result has parallels with Theorem~6 of \citet{soloff2024edge}, which computes the asymptotic weighted classification regret of the Support Line procedure. While our result achieves a slower rate, the main advantages are that it's valid in finite samples and that we do not assume the true $\lfdr$ is monotone. 
 \Cref{thm:actionable} also complements the Brier regret guarantee of \Cref{thm:regret-conv} by moving from a global evaluation of probability forecasts to the threshold-specific binary decision problem. 
The proof, given in Appendix~\ref{sec:proof-of-thm:actionable}, follows the same overall approach as the proof of Theorem~\ref{thm:regret-conv}. 
As before, the term $\sqrt{\frac{2\pi}{m}}$ in the regret bound comes directly from the DKW inequality on~$\|F_m-F\|_\infty$. Thus, independence is not actually necessary; uniform convergence of the empirical cdf suffices.
One new feature of this result is the factor $\alpha$ multiplying $\sqrt{\frac{2\pi}{m}}$:
the contribution to the regret from estimating~$F$ shrinks with the false-negative cost $\alpha$, and the fixed-threshold problem becomes correspondingly easier as $\alpha \to 0$.

\section{Empirically assessing calibration}
\label{sec:assessment}

The previous section studies methods for \emph{achieving} approximate calibration in multiple testing. 
Here we turn to the dual problem of \emph{assessing} whether a given predictor~$g$ is calibrated on a particular dataset.
In supervised settings, the standard tool is the reliability diagram~\citep{blasiok2024smooth}, which compares the average score within a bin to the empirical frequency of the positive class within that bin. 
For a perfectly calibrated predictor, the resulting points lie on the diagonal $y = x$. 
Reliability diagrams can thus be viewed as a univariate regression of the labels onto the predictions~\citep{brocker2008some, copas1983plotting}, with histogram binning as the simplest such regression. 
The challenge in our setting is, once again, that the labels~$(Y_i)$ are unobserved. We show that the pseudo-labels serve as a valid substitute. We assume throughout that $g$ is a pre-trained predictor.
As is standard in supervised learning, when the same data is used to construct and evaluate the predictor, one should split the data into disjoint train and test sets.

\paragraph{Histogram binning with pseudo-labels.} Fix a predictor $g$ and partition its range into $B$ bins $A_1,\ldots,A_B\subset [0,1]$. The supervised reliability diagram plots, for each bin~$b$, the pair $(\tilde{g}^{(b)}, \pi^{(b)})$, where $\tilde{g}^{(b)} = \EE[g(p) \mid g(p)\in A_b]$ and $\pi^{(b)} = \EE[Y\mid g(p)\in A_b]$ are the average score and null probability within the bin. Up to the discretization, perfect calibration corresponds to $\pi^{(b)}=\tilde{g}^{(b)}$ for every~$b$. 
Because labels are hidden, we cannot estimate $\pi^{(b)}$ directly as an empirical frequency~$\frac{1}{n_b}\sum_{i : g(p_i)\in A_b} Y_i$, but we can use the pseudo-labels $\hat\pi^{(b)} := \frac{1}{n_b}\sum_{i : g(p_i)\in A_b} \widetilde{Y}_i.$ Note that this is the block average~\eqref{eq:block-avg} over the bin~$A_b$. Deviations from the diagonal can be interpreted as in supervised reliability diagrams, with the bin-wise gap $(\hat\pi^{(b)} - \tilde{g}^{(b)})^2$ approximating the calibration component of the Brier score.

\paragraph{Beyond histogram binning.} Histogram binning shares many of the well-known limitations of ECE for assessing calibration error: ECE is not efficiently estimable from samples, and its binned counterpart depends sensitively on the choice of bin width. Kernel-smoothed reliability diagrams~\citep{blasiok2024smooth} address these issues and are also compatible with our pseudo-label approach.
We carry out this construction and apply it to the real-data example in Appendix~\ref{app:rd}.

\section{Experiments} \label{sec:experiments}

We conduct simulation studies and a real-data analysis to illustrate the methods of Sections~\ref{sec:post-hoc} and~\ref{sec:assessment}.

\subsection{Simulation studies} \label{sec:sims}

We compare the Brier regret over~$\mathcal{G}_\uparrow$ of three calibrators: the identity predictor (raw $p$-values), Storey's $q$-value \citep{storey2002direct}, and our isotonic estimator~$\widehat\lfdr_\uparrow$.

\paragraph{Simulation setup.} $p$-values are drawn from a two-groups model~$f = \pinull f_0+(1-\pinull)f_1$, where $f_0$ is the uniform density on~$[0,1]$ and $f_1$ is a~$\text{Beta}(\alpha,\beta)$ density. We vary $\pinull\in\{0.5, 0.75,0.9\}$, $\alpha\in \{0.5, 0.95, 1.5\}$, $m\in \{10^2, 10^3, 10^4, 5\times 10^4\}$ and fix $\beta = 2.3$. 
Values~$\alpha \le 1$ correspond to scenarios where the $\lfdr$ is itself monotone, and~$\alpha=1.5$ violates monotonicity so that~$\lfdr\ne \lfdr_\uparrow$. 

\begin{figure}[t!]
    \centering
    \includegraphics[width=\textwidth]{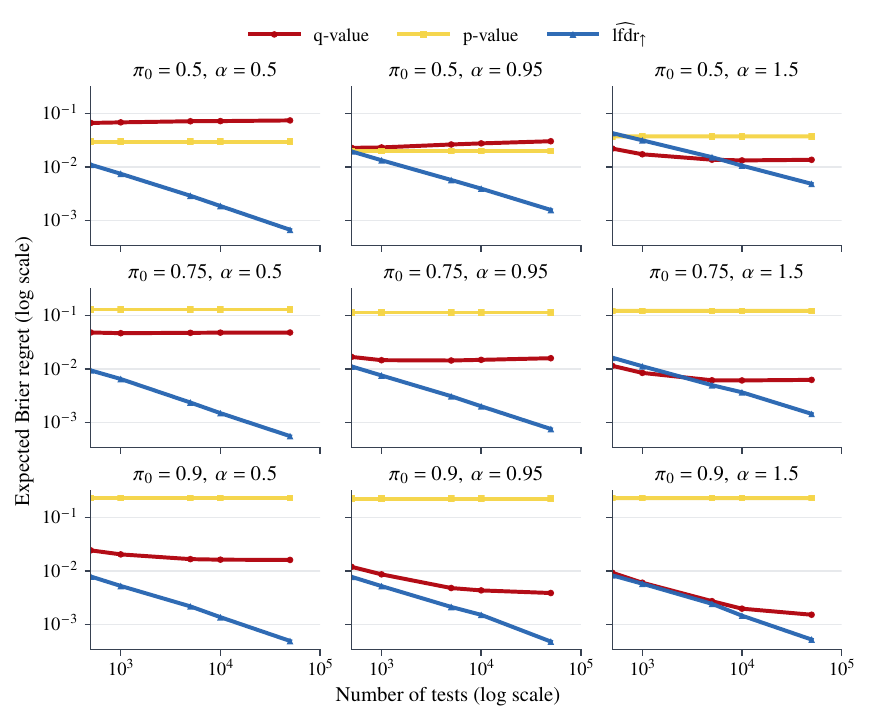}
    \caption{$\log$-$\log$ plots for simulation studies of (inductive) expected Brier regret for $p$-values, $q$-values and the isotonic calibration estimator $\widehat\lfdr_\uparrow$. Each panel represents a new simulation scenario, with different null proportion~$\pinull$ and parameter~$\alpha$ for the alternative density. }
    \label{fig:simulation-studies}
\end{figure}

\paragraph{Evaluation.} For any piecewise constant calibrator $\hat{g} \in \{\widehat \lfdr_\uparrow, \hat{q} \}$, the Brier regret~\eqref{eq:brier-regret} can be evaluated as
\begin{align}
    \mathrm{Reg}_\mathcal{G_\uparrow}(\hat{g}) &= \sum_{s \in \text{level sets of } \hat g} \int_{p : \hat{g}(p)=s} f(t) (\lfdr(t) - s)^2 dt - \EE[(\lfdr(p)  - \lfdr_{\uparrow}(p))^2] 
    \label{eq:level-sets}
\end{align}
which we evaluate by numerical integration. We report average regret over~$500$ Monte Carlo draws. 

\paragraph{Results.} Figure~\ref{fig:simulation-studies} displays the results across simulation scenarios. The regret of $\widehat \lfdr_\uparrow$ decreases steadily with the number of tests $m$ in every panel.
By contrast, the regret for $p$-values is flat (since the predictor is the identity and thus does not depend on~$m$), and the regret for $q$-values is larger than that of $\widehat \lfdr_\uparrow$ in most cases. 
This contrast reflects what Brier regret over $\mathcal{G}_\uparrow$ is measuring: the gap to the best attainable Brier score by any monotone predictor. 
The $p$- and $q$-value predictors do not target the isotonized $\lfdr$, and so their regret does not vanish even as $m$ diverges.
In line with Theorem~\ref{thm:regret-conv}, $\widehat\lfdr_\uparrow$ converges at a polynomial rate.
The estimated rates for the Brier regret of $\widehat\lfdr_\uparrow$ are provided in Table~\ref{tab:decay_rates}.

The rightmost panel of Figure~\ref{fig:simulation-studies} provides results on the case when we apply our post-hoc calibration method when the monotonicity assumption is violated. Here, we still see decay in the regret of $\widehat \lfdr_\uparrow$. This too is consistent with the theoretical analysis: unlike the first two vertical panels, $\lfdr \neq \lfdr_\uparrow.$ However, $\lfdr_\uparrow$ is perfectly calibrated, and by Theorem~\ref{thm:regret-conv}, $\widehat \lfdr_\uparrow$ still converges to $\lfdr_\uparrow.$

\subsection{Real data application} \label{sec:rd}
We apply the post-hoc calibration method to data curated by \citet{szucs2017empirical}, comprising $m \approx 27,000$ records of $t$-statistics and degrees of freedom scraped from 18 journals that published work in cognitive neuroscience and experimental psychology (2011-2014). Figure~\ref{fig:real-data} (left) shows the empirical distribution of the resulting $p$-values. The estimated density appears monotone decreasing, suggesting that the working assumption underlying isotonic calibration is well-suited to these data. 

Szucs and Ioannidis sought to estimate the probability that a statistically significant finding is false, which they dub the ``false reporting rate" (FRP). They compute FRP by the formula $\mathrm{FRP} = \frac{O \alpha }{O \alpha + \text{Power}}$, where $O$ is the pre-study odds of the null hypothesis, and $\alpha$ is often $0.05.$ Power was calculated from the non-central $t$-distribution. The authors present a sensitivity analysis to show that their calculations for $\mathrm{FRP}$ vary substantially for different plausible prior odds.

We use Storey's estimator $\hpinull$ and use the histogram regression method presented in Section~\ref{sec:assessment} to assess calibration, splitting the data into training and test sets. 
From Figure~\ref{fig:real-data} (right), the~$p$-values and $q$-values appear severely miscalibrated, while the points for~$\widehat\lfdr_\uparrow$ lie close to the diagonal. 
Practitioners may thus reasonably interpret the calibrated values as local probability statements for individual studies. 
The smoothed reliability diagram in Appendix~\ref{app:rd} shows the same qualitative pattern, suggesting some robustness to bin choice. 

While \citet{szucs2017empirical} seek to estimate a `post-hoc' false positive rate, their method still is not adapted to make local probability claims about the individual hypotheses. From a meta-science perspective, knowing which findings within a given subfield have a high probability of being true or not true is of great interest. Our method is particularly well-adapted to the meta-science goals, since it is more reasonable to assume that aggregated test statistics from many studies in a \emph{discipline} are more likely to be independent and identically distributed from a two-groups model than, say, test statistics aggregated from a genome-wide association study.

\begin{figure}[t!]
    \centering
        \includegraphics{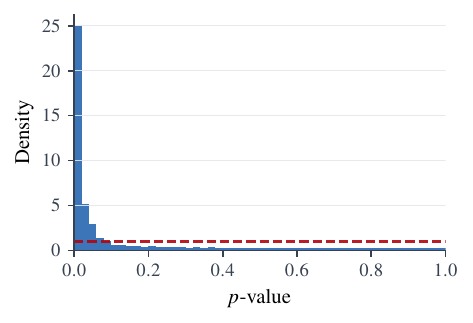}
        \includegraphics{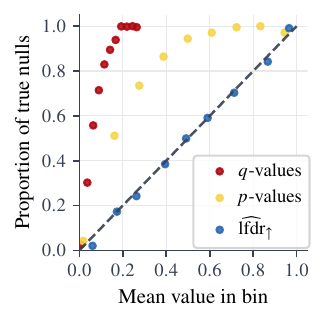}
    \caption{
    Calibration results on data curated by \citet{szucs2017empirical}. Left: Empirical distribution of $p$-values. The dashed line denotes the $\mathrm{Unif}(0,1)$ density. Right: Reliability plot assessing calibration of reported $p$-values.
    }
    \label{fig:real-data}
\end{figure}

\section{Discussion}

We have argued that calibration is a key inferential goal of large-scale hypothesis testing and shown how our pseudo-label construction enables calibration without ground-truth labels.
Our approach, through the use of pseudo-labels constructed from test statistics themselves, draws together three natural viewpoints for treating many related tests as a resource: isotonic calibration, MLE regression on the pseudo-labels and nonparametric empirical Bayes estimation of the lfdr over the class of monotone functions. We show the equivalence between these three viewpoints in that they yield the same calibrator, which has vanishing Brier regret over~$\mathcal{G}_\uparrow$ as the number of tests $m \to \infty$. The first two views do not assume any model for the alternative densities. This highlights that informative approximate calibration through the means of the third approach (empirical Bayes) is consistent, even when the monotonicity assumption for the densities does not hold. 

Turning attention to in-sample assessment of calibration, we show how any functional $\EE[\Psi(p, Y)]$ can be estimated consistently by substituting the pseudo-labels for $Y$. This allows the suite of tools for assessing calibrators in supervised settings to be used in multiple testing, as the number of tests grows, including popular options such as histogram binning and its smoothed counterpart. Our real data illustrations provide a principled test bed for making calibrated claims about individual hypotheses in large, meta-analysis data sets~\citep{patel2026atlas, ioannidis2008calibration}.

\paragraph{Limitations and future directions.} There are several limitations to our work which inspire our future directions. First, the theory presented in this paper is centered around using $p$-values as test statistics. The ability to reasonably use $p$-value gaps as pseudo-labels hinges on their uniformity under the null. We pose it as an open question for how to construct similar pseudo-labels when the class-conditional null density is not uniform, for example with $z$-scores or high-dimensional test statistics. In that vein, connections to estimating the empirical null distribution may be explored. Within our framing there are other connections to supervised learning that can be explored, including multi-class assignments. To further paint the analogy between supervised learning and multiple testing, it is of interest to see how approaches commonly deployed in supervised learning such as cross-validation interface with our methods.

\bibliographystyle{dcu}
\bibliography{refs}

\appendix

\section{Proofs of technical results}

\subsection{Proof of Theorem~\ref{prop-iso-lfdr}}\label{sec:lfdr-iso-cal}

\begin{proof}
We want to show 
\begin{align}\label{eq:calibration-wts}
    \EE\left[\lfdr(p)\mid  \lfdr_\uparrow(p)\right] = \lfdr_\uparrow(p)
\end{align}
almost surely. The left-hand side is the almost surely unique random variable $W$ such that
\begin{itemize}
    \item $W$ is a measurable function of $\lfdr_\uparrow(p)$, and 
    \item $\EE[\lfdr(p)1_A] = \EE[W1_A]$ for all $A\in \sigma(\lfdr_\uparrow(p))$.
\end{itemize}
We claim $W=\lfdr_\uparrow(p)$ satisfies both properties. Measurability is immediate. The second property was shown by~\cite{brunk1965conditional}---see Theorem~3.1. Since $W$ is a.s. unique, we conclude \eqref{eq:calibration-wts}.
\end{proof}

\subsection{Proof of Theorem~\ref{thm:regret-conv}}\label{sec:regret-proof}

The proof of Theorem~\ref{thm:regret-conv} relies on the observation that the optimization objective for our isotonic calibration estimator~$\widehat\lfdr_\uparrow$ can be written in a way that only depends on the data via the empirical cdf
$$F_m(t) = \frac{1}{m}\sum_{i=1}^m1\{p_i\le t\}$$
and \emph{not} on the pseudo-labels $(\widetilde{Y}_i)$.

\begin{lemma}\label{lem:empirical-process-representation}
    The isotonic calibration estimator~$\widehat\lfdr_\uparrow$ minimizes
$$
    \int g^2(p)\,\textnormal{d}F_m(p) - 2\hpinull\int g(p)\,\textnormal{d}p
$$
    over $g\in \mathcal{G}_\uparrow$.
\end{lemma}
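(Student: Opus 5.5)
The plan is to reduce the infinite-dimensional problem over $\mathcal{G}_\uparrow$ to a finite-dimensional problem over the fitted values $\theta_r = g(p_{(r)})$, and then recognize that problem as the isotonic regression \eqref{eq:pava}. Write
\[
\Phi_m(g) = \int g^2\,\textnormal{d}F_m - 2\hpinull\int_0^1 g(t)\,\textnormal{d}t .
\]
The first term, $\int g^2\,\textnormal{d}F_m = \frac{1}{m}\sum_{r=1}^m \theta_r^2$, depends on $g$ only through $\theta_1\le\cdots\le\theta_m$. So the whole task is to control the Lebesgue integral of $g$ given these values.

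\textbf{Step 1 (optimal interpolation).} Fix any $g\in\mathcal{G}_\uparrow$ with values $\theta_r$. Monotonicity gives $g(t)\le g(p_{(r)})=\theta_r$ for $t\in(p_{(r-1)},p_{(r)}]$, with the convention $p_{(0)}=0$. The constraint $g\le 1$ gives $g(t)\le 1$ on $(p_{(m)},1]$. Hence
\[
\int_0^1 g \le \sum_{r=1}^m \theta_r\,(p_{(r)}-p_{(r-1)}) + (1-p_{(m)}).
\]
Equality holds exactly for the right-continuous step function in the definition of $\widehat\lfdr_\uparrow$, namely $\theta_r$ on $(p_{(r-1)},p_{(r)}]$ and $1$ on $(p_{(m)},1]$. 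This function is itself in $\mathcal{G}_\uparrow$, because $\theta_m\le 1$. Replacing $g$ by this step interpolant leaves the first term unchanged and can only decrease $\Phi_m$ (recall $\hpinull>0$). So it suffices to minimize over such step functions, that is, over $0\le\theta_1\le\cdots\le\theta_m\le1$.

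\textbf{Step 2 (completing the square).} For step functions, using $\widetilde{Y}_{(r)} = m\hpinull(p_{(r)}-p_{(r-1)})$,
\[
\Phi_m = \frac1m\sum_{r=1}^m\Big(\theta_r^2 - 2\theta_r\widetilde{Y}_{(r)}\Big) - 2\hpinull(1-p_{(m)})
= \frac1m\sum_{r=1}^m(\theta_r-\widetilde{Y}_{(r)})^2 + C,
\]
where $C$ does not depend on $\theta$. Minimizing over the constraint set is therefore exactly problem \eqref{eq:pava}. Its solution $\widehat\theta^{\text{BS}}$, via \Cref{thm:equivalence}, defines $\widehat\lfdr_\uparrow$. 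So $\widehat\lfdr_\uparrow$ attains the infimum of $\Phi_m$ over $\mathcal{G}_\uparrow$: any $g$ satisfies $\Phi_m(g)\ge \Phi_m(\text{step interpolant of }g)\ge\Phi_m(\widehat\lfdr_\uparrow)$.

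\textbf{Main obstacle.} The only delicate point is Step 1, specifically checking that monotone functions cannot do better between data points. This is where the direction of monotonicity and the cap at $1$ (which handles the region beyond $p_{(m)}$) matter. I would also verify the degenerate cases:
\begin{itemize}
\item With tied $p$-values, some spacings vanish, so the corresponding pseudo-labels are $0$. The tied indices share a single fitted value, which the isotonic constraint with equal covariates accommodates.
\item The truncation at $1$ in \eqref{eq:pava} matches the range constraint of $\mathcal{G}_\uparrow$.
\end{itemize}
Uniqueness of the fitted values, though not needed for the statement, follows from strict convexity in $\theta$.
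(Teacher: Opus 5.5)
Your proposal is correct and follows essentially the same argument as the paper. Both rest on two facts: the upper Riemann-sum bound $\int_0^1 g \le \sum_{r=1}^m g(p_{(r)})(p_{(r)}-p_{(r-1)}) + (1-p_{(m)})$ for nondecreasing $g\le 1$, with equality for the right-continuous step function $\widehat\lfdr_\uparrow$; and expanding the square, so that on such step functions the objective equals the pseudo-label least-squares criterion \eqref{eq:pava} plus a constant. Your reduction to step interpolants is just a repackaging of the paper's two-inequality chain.
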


\begin{proof}
    By definition, $\widehat\lfdr_\uparrow$ minimizes 
$$\frac{1}{m}\sum_{i=1}^m (g(p_i) - \widetilde{Y}_i)^2$$
over $g\in \mathcal{G}_\uparrow$. Expanding the square, we find that, ignoring constants not depending on $g$, the objective can be rewritten as
$$
\int g^2 \text{d}F_m - \frac{2}{m}\sum_{i=1}^m \widetilde{Y}_i g(p_i) 
= \int g^2 \text{d}F_m - 2\hpinull\sum_{r=1}^m (p_{(r)}-p_{(r-1)}) g(p_{(r)}).
$$
Note that, since $\widehat\lfdr_\uparrow$ is constant except at the observed $p$-values and since $\widehat\lfdr_\uparrow(1)=1$, the integral equals the Riemann sum plus a small correction:
$$
\int \widehat\lfdr_\uparrow\,\text{d}p = \sum_{r=1}^m \widehat\lfdr_\uparrow(p_{(r)})(p_{(r)}-p_{(r-1)}) + (1-p_{(m)}).
$$
On the other hand, for any other $g\in \mathcal{G}_\uparrow$, since $g$ is nondecreasing and $g(1)\le 1$, we have 
$$
\int g\,\text{d}p 
\le \sum_{r=1}^m g(p_{(r)})(p_{(r)}-p_{(r-1)}) + (1-p_{(m)}).
$$
Thus, we have
\begin{align*}
    \int \widehat\lfdr_\uparrow^2\,\text{d}F_m - 2\hpinull\int\widehat\lfdr_\uparrow\,\text{d}p 
    &=\int \widehat\lfdr_\uparrow^2 \,\text{d}F_m - 2\hpinull\left(\sum_{r=1}^m (p_{(r)}-p_{(r-1)}) \widehat\lfdr_\uparrow(p_{(r)}) + (1-p_{(m)}) \right)\\
    &\le\int g^2\, \text{d}F_m - 2\hpinull\left(\sum_{r=1}^m (p_{(r)}-p_{(r-1)}) g(p_{(r)}) + (1-p_{(m)}) \right)\\
    &\le\int g^2\, \text{d}F_m - 2\hpinull\int g\,\text{d}p, 
\end{align*}
proving the lemma.
\end{proof}

\begin{proof}[Proof of Theorem~\ref{thm:regret-conv}]
    Observe 
    \begin{align*}
        \BS{g} 
        &= \EE[Y] + \EE[g^2(p)] - 2\pinull\EE[g(p)\mid Y=1]\\
        &= \pinull + \int g^2(p)\, \text{d}F(p) - 2\pinull \int_0^1 g(p)\,\text{d}p.
    \end{align*}
    Thus,
    \begin{align*}
        \mathrm{Reg}_{\mathcal{G}_\uparrow}(\widehat\lfdr_\uparrow)
        &=\BS{\widehat\lfdr_\uparrow} - \BS{\lfdr_\uparrow}  \\
        &= \int \widehat\lfdr_\uparrow^2\, \text{d}F - \int \lfdr_\uparrow^2\, \text{d}F + 2\pinull\left(\int\lfdr_\uparrow\,\text{d}p-\int\widehat\lfdr_\uparrow\,\text{d}p\right) \\
        &= \int \widehat\lfdr_\uparrow^2\, \text{d}F_m - \int \lfdr_\uparrow^2\, \text{d}F_m +2\hpinull\left(\int\lfdr_\uparrow\,\text{d}p-\int\widehat\lfdr_\uparrow\,\text{d}p\right) \\
        &~~~+\int \widehat\lfdr_\uparrow^2\, \text{d}(F-F_m) - \int \lfdr_\uparrow^2\, \text{d}(F-F_m) \\
        &~~~+ 2(\pinull-\hpinull)\left(\int\lfdr_\uparrow\,\text{d}p-\int\widehat\lfdr_\uparrow\,\text{d}p\right).
    \end{align*}
    The first term is $\le 0$ by Lemma~\ref{lem:empirical-process-representation}. Using $\int g\,\text{d}p \in [0,1]$ and monotonicity of both $\widehat\lfdr_\uparrow$ and $\lfdr_\uparrow$,
    \begin{align*}
        \mathrm{Reg}_{\mathcal{G}_\uparrow}(\widehat\lfdr_\uparrow)&\le 2\sup_{g\in \mathcal{G}_\uparrow} \left|\int g^2\,\text{d}(F_m-F)\right| + 2|\hpinull-\pinull|
    \end{align*}
    Now for any $g\in \mathcal{G}_\uparrow$,
    \[\int g^2\,\text{d}(F_m-F) = \int (F_m-F)\,\text{d}(g^2) \le \|F_m-F\|_\infty \int \text{d}(g^2) \le \|F_m-F\|_\infty.\] 
    In expectation,
    \[
    \EE\left[\mathrm{Reg}_{\mathcal{G}_\uparrow}(\widehat\lfdr_\uparrow)\right] \le 2\EE\|F_m-F\|_\infty + 2|\hpinull-\pinull|.
    \]
    By the DKW inequality \citep{massart1990tight}, 
    \[
    \mathbb{P}\{\|F_m-F\|_\infty\ge t\}\le 2e^{-2mt^2}.
    \]
    Integrating the tail gives
    \[
    \EE\|F_m-F\|_\infty
    \le \int_0^1 2e^{-2mt^2}\text{d}t
    = \sqrt{\frac{2\pi}{m}}\int_0^{2\sqrt{m}} \frac{1}{\sqrt{2\pi}}e^{-s^2/2}\text{d}s
    \le \sqrt{\frac{\pi}{2m}},
    \]
    completing the proof.
\end{proof}

\subsection{Proof of \Cref{thm:actionable}}\label{sec:proof-of-thm:actionable}

\begin{proof}[Proof of \Cref{thm:actionable}]
    Define $\widehat R_\alpha(t)=\alpha(1-\pi_{\mathrm{null}})+\widehat\pi_{\mathrm{null}}\,t-\alpha F_m(t)$;
the added constant does not depend on~$t$, so we
still have $\widehat R_\alpha(\widehat\tau)\le\widehat R_\alpha(t)$ for all $t$,
in particular at $t=\tau^\ast$. Decomposing the regret and using this
optimality,
\begin{align*}
  R_\alpha(\widehat\tau)-R_\alpha(\tau^\ast)
  &= \big[R_\alpha(\widehat\tau)-\widehat R_\alpha(\widehat\tau)\big]
    +\underbrace{\big[\widehat R_\alpha(\widehat\tau)-\widehat R_\alpha(\tau^\ast)\big]}_{\le 0}
    +\big[\widehat R_\alpha(\tau^\ast)-R_\alpha(\tau^\ast)\big]\\
  &\le 2\sup_{t\in[0,1]}\big|R_\alpha(t)-\widehat R_\alpha(t)\big|.
\end{align*}
The constant $\alpha(1-\pi_{\mathrm{null}})$ cancels in the difference $R_\alpha(t)-\widehat R_\alpha(t)$, leaving
\[
  R_\alpha(t)-\widehat R_\alpha(t)
  \;=\;(\pi_{\mathrm{null}}-\widehat\pi_{\mathrm{null}})\,t-\alpha\big(F(t)-F_m(t)\big),
\]
so, using $t\le 1$,
\[
  \sup_{t\in[0,1]}\big|R_\alpha(t)-\widehat R_\alpha(t)\big|
  \;\le\; |\widehat\pi_{\mathrm{null}}-\pi_{\mathrm{null}}|
   \;+\; \alpha\,\|F_m-F\|_\infty .
\]
Taking expectations,
\[
  \E\left[\mathcal{R}_\alpha(\widehat{\lfdr}_\uparrow)\right]-\inf_{g\in\mathcal G_\uparrow}\mathcal{R}_\alpha(g)
  \;\le\; 2\,\E|\widehat\pi_{\mathrm{null}}-\pi_{\mathrm{null}}|
   \;+\; 2\alpha\,\E\|F_m-F\|_\infty .
\]
The rest of the proof proceeds as in the proof of \Cref{thm:regret-conv}.
\end{proof}

\section{Additional experimental details}

\subsection{Additional information for Section~\ref{sec:sims}} \label{app:sims}
\paragraph{Compute details.} All experiments were run on the university computing cluster. We parallelized each Monte Carlo run over 500 scheduled jobs. Each job was run on a standard cluster CPU with 16G memory. The runtime for each job after shuffling the Monte Carlo runs was~4 hours. 

\begin{figure}[t]
    \centering
    \includegraphics{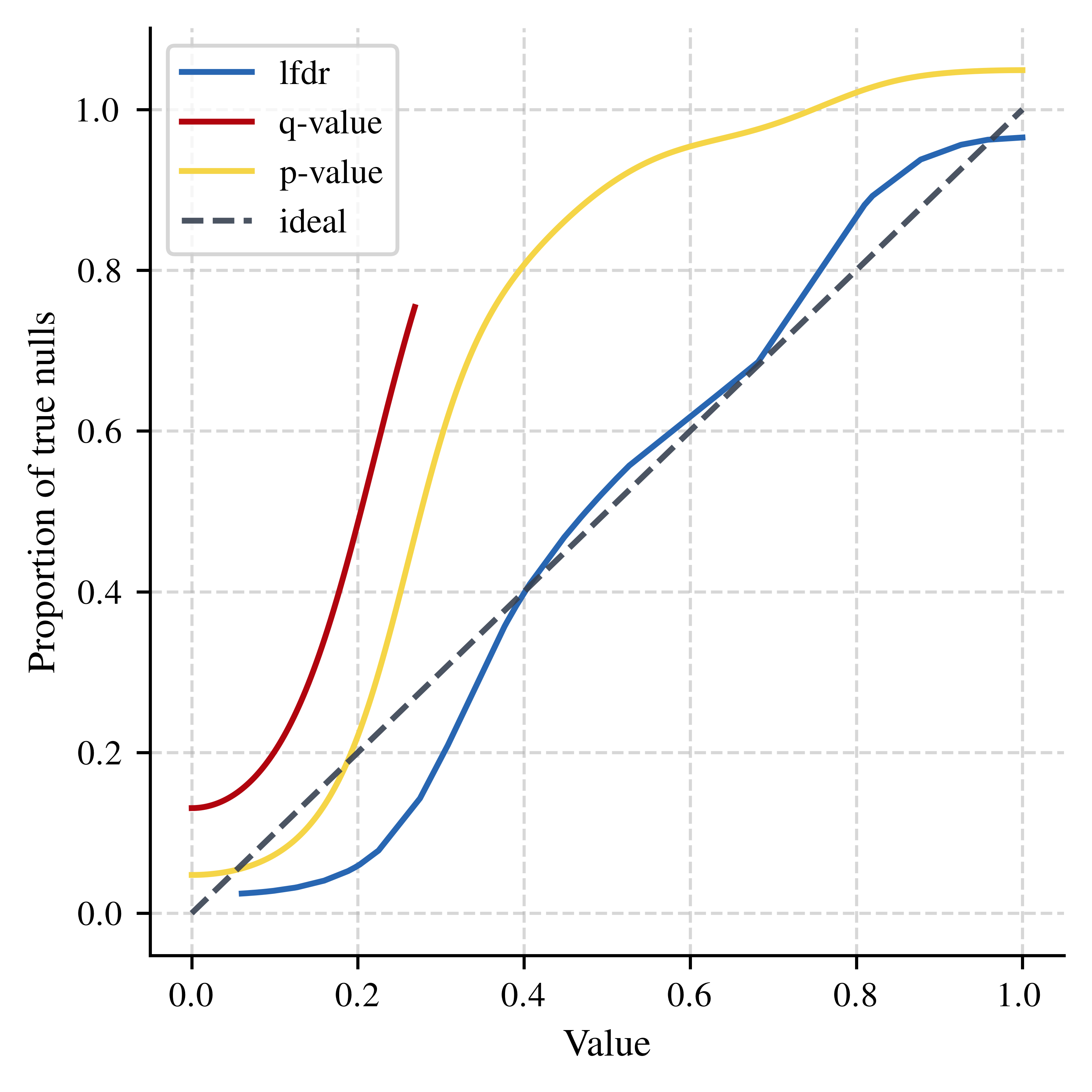}
    \caption{Smoothed reliability plot following the method of~\cite{blasiok2024smooth}. Bandwidth $\sigma = 0.1$.}
    \label{fig:smoothed-reliability}
\end{figure}

\subsection{Additional information for Section~\ref{sec:rd}} \label{app:rd}

\begin{table*}[t]
\centering
\small

\begin{subtable}{\textwidth}
\centering
\begin{tabular}{cccc}
\toprule
$\pinull$ & $\alpha = 0.50$ & $\alpha = 0.95$ & $\alpha = 1.50$ \\
\midrule
0.50 & $-0.6095 \; (0.0088)$ & $-0.5457 \; (0.0056)$ & $-0.4753 \; (0.0069)$ \\
0.75 & $-0.6214 \; (0.0072)$ & $-0.5849 \; (0.0085)$ & $-0.5202 \; (0.0111)$ \\
0.90 & $-0.6004 \; (0.0136)$ & $-0.5963 \; (0.0235)$ & $-0.6012 \; (0.0202)$ \\
\bottomrule
\end{tabular}
\caption{$\widehat \lfdr_\uparrow$}
\end{subtable}

\vspace{1em}

\begin{subtable}{\textwidth}
\centering
\begin{tabular}{cccc}
\toprule
$\pinull$ & $\alpha = 0.50$ & $\alpha = 0.95$ & $\alpha = 1.50$ \\
\midrule
0.50 & $0.0000$ & $0.0000$ & $0.0001 \; (0.0000)$ \\
0.75 & $0.0000$ & $0.0000$ & $0.0000 \; (0.0000)$ \\
0.90 & $0.0000$ & $0.0000$ & $0.0000 \; (0.0000)$ \\
\bottomrule
\end{tabular}
\caption{$p$-value}
\end{subtable}

\vspace{1em}

\begin{subtable}{\textwidth}
\centering
\begin{tabular}{cccc}
\toprule
$\pinull$ & $\alpha = 0.50$ & $\alpha = 0.95$ & $\alpha = 1.50$ \\
\midrule
0.50 & $\phantom{-}0.0235 \; (0.0025)$ & $\phantom{-}0.0652 \; (0.0034)$ & $-0.1026 \; (0.0355)$ \\
0.75 & $\phantom{-}0.0023 \; (0.0035)$ & $-0.0057 \; (0.0206)$ & $-0.1288 \; (0.0460)$ \\
0.90 & $-0.0890 \; (0.0248)$ & $-0.2522 \; (0.0501)$ & $-0.4053 \; (0.0541)$ \\
\bottomrule
\end{tabular}
\caption{$q$-value}
\end{subtable}

\caption{Estimated decay rates with Monte Carlo standard errors (in parentheses) across different calibrators and parameter settings $(\pinull, \alpha)$.}
\label{tab:decay_rates}
\end{table*}

Here, we assess the calibration of our method using more principled approaches that treat the points used to assess ECE through binning as coarsenings from a function that ought to be (smoothly) estimated through a regression problem. We adapt the smoothed reliability diagrams of~\cite{blasiok2024smooth}. The so-called ``calibration function'' is
\begin{align*}
    \mu(t) := \EE[Y \mid g(p) = t]
\end{align*}
which we seek to estimate from the data. Observe that $g$ is perfectly calibrated if $\mu(t) = t$. 
Rather than use histogram binning/regression,~\cite{blasiok2024smooth} use kernel smoothing and estimate $\mu(t):= \EE[Y \mid g(p) = t]$ as
\begin{align*}
    \hat \mu(t) 
    = \frac{\sum_{i} K_\sigma (t, g(p_i)) Y_i}{\sum_i K_\sigma(t, g(p_i))} 
    = \frac{\frac{1}{m}\sum_{i} K_\sigma (t, g(p_i)) Y_i}{\frac{1}{m}\sum_i K_\sigma(t, g(p_i))}
\end{align*}
where $K_\sigma$ is a specially reflected Gaussian kernel. Since we don't observe $Y_i$, we instead propose using the pseudo-labels $\widetilde Y_i$ in their place. Observe that for fixed $t$, $K_\sigma(t, g(p))$ can be represented as a measurable test function $\psi: [0, 1] \to \mathbb{R}$. Then, using the argument presented in Section~\ref{sec:framework} for the numerator, and by the continuous mapping theorem, the ratio converges in probability to
\begin{align*}
   \tilde \mu(t) := \frac{
    \pinull \,\EE\!\left[K_\sigma(t,g(p)) \mid Y = 1\right]
    }{
    \EE\!\left[K_\sigma(t,g(p))\right]
    },
\end{align*}
the population smoothed estimate of $\mu(t).$  Therefore, we can use the pseudo-labels when constructing the smoothed reliability diagrams which plot the pairs $(t, \hat \mu(t)).$

We display the smoothed reliability diagram in Figure~\ref{fig:smoothed-reliability}, and note that it mostly agrees with the histogram binning version of the reliability plot displayed in Figure~\ref{fig:real-data}.

\end{document}